\documentclass[a4paper,12pt,DIV11,final]{scrartcl}
\pagestyle{plain}

\usepackage{amsmath}
\usepackage{amssymb}
\usepackage{amsthm}
\usepackage{xcolor}

\usepackage[english]{babel}
\usepackage{ifthen}
\usepackage{booktabs}
\usepackage{multirow}
\usepackage{xspace}
\usepackage{enumerate}
\usepackage{enumitem}
\usepackage{url}

\usepackage[hidelinks,final]{hyperref}

\usepackage{tikz}
\usetikzlibrary{shapes}



\definecolor{darkgreen}{rgb}{0.0,0.7,0.0}
\newenvironment{bs}{\noindent\color{darkgreen} BS:}{}

\newenvironment{mk}{\noindent\color{blue} MK:} {}

\newenvironment{vd}{\noindent\color{red} VD:} {}

\newcommand{\refthm}[1]{Theorem~\ref{#1}\xspace}

\newcommand{\reflem}[1]{Lemma~\ref{#1}\xspace}
\newcommand{\refprp}[1]{Proposition~\ref{#1}\xspace}


\newcommand{\set}[2]{\left\{#1\mathrel{\left|\vphantom{#1}\vphantom{#2}\right.}#2\right\}}
\newcommand{\oneset}[1]{\left\{\mathinner{#1}\right\}}
\newcommand{\smallset}[1]{\left\{#1\right\}}


\let\iff=\undefined

\newcommand{\iff}       {\mathrel{\Leftrightarrow}}


\newcommand{\abs}[1]{\left|\mathinner{#1}\right|}


\newcommand{\N}{\mathbb{N}}


\newcommand{\Synt}{\mathrm{Synt}}



%

\newcommand{\alp}{\mathrm{alph}}


\newcommand{\SF}{\mathrm{SF}}
\newcommand{\AP}{\mathrm{AP}}


\newtheorem{theorem}{Theorem}

\newtheorem{proposition}{Proposition}
\newtheorem{lemma}{Lemma}

\newtheorem{remrk}{Remark}
\newenvironment{remark}{\begin{remrk}\upshape}{\end{remrk}}


\newcommand{\ERM}{\hspace*{\fill}\ensuremath{\Diamond}} 

\setlist{itemsep=1pt,parsep=0pt,topsep=2pt}

\newcommand{\ew}{\varepsilon} 

\newcommand{\loanword}[1] {\textit{#1}}
\newcommand{\eg} {\loanword{e.g.}\xspace}
\newcommand{\ie} {\loanword{i.e.}\xspace}
\newcommand{\cf} {\loanword{cf.}\xspace}


\hyphenation{homo-morphism}

\begin{document}

\title{Star-Free Languages \\ and Local Divisors}

\author{Manfred Kuf\-leitner \\[4.5mm]
{\normalsize FMI,
University of Stuttgart, Germany\thanks{The author gratefully acknowledges
    the support by the German Research Foundation (DFG) under grant
    \mbox{DI 435/5-1} and the support by \mbox{ANR 2010 BLAN 0202 FREC}.}} \\[-2mm] 
{\normalsize\texttt{kufleitner@fmi.uni-stuttgart.de}}}

\date{}

\maketitle

\begin{abstract}
\noindent
  \textbf{Abstract.}\,
  A celebrated result of Sch{\"u}tzenberger says
  that a language is star-free if and only if it is is recognized by a
  finite aperiodic monoid.  We give a new proof for this theorem using
  local divisors.
\end{abstract}

\section{Introduction}

The class of regular languages is built from the finite languages
using union, concatenation, and Kleene star.  Kleene showed that a language
over finite words is definable by a regular expression if and only if it is accepted by some
finite automaton~\cite{kle56}.  In particular, regular languages are
closed under complementation. It is easy to see that a language is
accepted by a finite automaton if and only if it is recognized by a
finite monoid. As an algebraic counterpart for the minimal automaton
of a language, Myhill introduced the \emph{syntactic monoid},
\cf~\cite{rs59}. 

An extended regular expression is a term over finite languages using
the operations union, concatenation, complementation, and Kleene
star. By Kleene's Theorem, a language is regular if and only if it is
definable using an extended regular expression. It is natural to ask
whether some given regular language can be defined by an extended
regular expression with at most $n$ nested iterations of the Kleene star
operation\,---\,in which case one says that the language has
generalized star height $n$. The resulting decision problem is called the
\emph{generalized star height problem}. 
Generalized star height zero means that no Kleene star operations are
allowed. Consequently, languages with generalized star height zero are called
\emph{star-free}. Sch\"utzenberger showed that a language is
star-free if and only if its syntactic monoid is
aperiodic~\cite{sch65sf}. Since aperiodicity of finite monoids is
decidable, this yields a decision procedure for generalized star height
zero. To date, it is unknown whether or not all regular languages have generalized star height one.



In this paper, we give a proof of Sch\"utzenberger's result based on \emph{local divisors}.  In commutative algebra, local divisors were introduced by Meyberg in 1972, see~\cite{FeTo02,Mey72}.  In
finite semigroup theory and formal languages, local divisors were first used by Diekert and Gastin for showing that pure future local temporal
logic is expressively complete for free partially commutative
monoids~\cite{dg06IC}.

This is a prior version of an invited contribution at the 16th International Workshop on Descriptional Complexity of Formal Systems (DCFS 2014) in Turku, Finland~\cite{kuf14dcfs}.\footnote{The final publication is available at Springer via 
\texttt{http://dx.doi.org/10.1007/}\linebreak[3]\texttt{978-}\linebreak[3]\texttt{3-}\linebreak[3]\texttt{319-}\linebreak[3]\texttt{09704-}\linebreak[3]\texttt{6\_3}.}

\section{Preliminaries}\label{sec:prem}

The set of finite words over an alphabet $A$ is $A^*$. It is the free
monoid generated by $A$. The empty word is denoted by $\ew$.  The \emph{length
  $\abs{u}$} of a word $u = a_1 \cdots a_n$ with $a_i \in A$ is $n$,
and the \emph{alphabet} $\alp(u)$ of $u$ is $\oneset{a_1, \ldots, a_n}
\subseteq A$. A language is a subset of $A^*$. The
concatenation of two languages $K,K' \subseteq A^*$ is $K \cdot K' =
\set{uv}{u \in K, v \in K'}$, and the set difference of $K$ by $K'$ is
written as~$K \setminus K'$.
Let~$A$ be a finite alphabet. The class of \emph{star-free languages}
$\SF(A^*)$ over the alphabet~$A$ is defined as follows:
\begin{itemize}
\item $A^* \in \SF(A^*)$ and $\smallset{a} \in \SF(A^*)$ for every $a
  \in A$.
\item If $K,K' \in \SF(A^*)$, then each of $K \cup K'$, $K \setminus K'$,
  and $K\cdot K'$ is in $\SF(A^*)$.
\end{itemize}
By Kleene's Theorem, a language is regular if and
only if it can be recognized by a deterministic finite
automaton~\cite{kle56}.  In particular, regular languages are closed
under complementation and thus, every star-free language is regular.


\begin{lemma}\label{lem:subsetSF}
  If $B \subseteq A$, then $\SF(B^*) \subseteq \SF(A^*)$.
\end{lemma}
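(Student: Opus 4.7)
The plan is to prove the inclusion by structural induction on the definition of $\SF(B^*)$. The induction hypothesis will be: every $K \in \SF(B^*)$ belongs to $\SF(A^*)$. There are two base cases (namely $B^*$ and the singletons $\smallset{b}$ for $b \in B$) and three inductive cases (union, set difference, and concatenation). The inductive cases are immediate because $\SF(A^*)$ is closed under exactly these three operations, so nothing needs to be done beyond invoking the induction hypothesis.

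For the base cases, the singletons are easy: if $b \in B$, then $b \in A$ as well, so $\smallset{b} \in \SF(A^*)$ directly from the definition. The only nontrivial point is expressing $B^*$ itself as a star-free language over the larger alphabet $A$. The idea is that a word over $A$ lies in $B^*$ exactly when it contains no letter from $A \setminus B$. Concretely, I would write
\[
  B^* \;=\; A^* \,\setminus\, \Bigl(A^* \cdot \bigcup_{a \in A \setminus B} \smallset{a} \cdot A^*\Bigr).
\]
Each $\smallset{a}$ with $a \in A \setminus B \subseteq A$ lies in $\SF(A^*)$, the finite union of these singletons lies in $\SF(A^*)$ by iterated application of the union rule, and concatenating on both sides with $A^* \in \SF(A^*)$ and then taking the set difference from $A^*$ keeps us in $\SF(A^*)$.

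The only potential obstacle is the case $A \setminus B = \emptyset$, where the displayed union is empty. This is handled painlessly in one of two ways: either note that $B = A$ makes the lemma trivial, or observe that the empty language $\emptyset = A^* \setminus A^*$ is star-free over $A$, so the displayed identity still produces $B^* = A^* \setminus \emptyset = A^* \in \SF(A^*)$. With both base cases in hand, the induction closes and the lemma follows.
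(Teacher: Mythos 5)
Your proof is correct and follows essentially the same route as the paper: the paper likewise reduces everything to showing $B^* \in \SF(A^*)$ (the structural induction being implicit in "it suffices") and uses the same identity $B^* = A^* \setminus \bigcup_{a \notin B} A^* a A^*$. Your treatment of the edge case $A \setminus B = \emptyset$ is a harmless extra detail the paper omits.
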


\begin{proof}
  It suffices to show $B^* \in \SF(A^*)$. We have $ B^* = A^* \,
  \setminus \, \bigcup_{b \not\in B} A^* b A^*$.
\end{proof}


A monoid $M$ is \emph{aperiodic} if for every $x \in M$ there exists a
number $n \in \N$ such that $x^n = x^{n+1}$.

\enlargethispage{\baselineskip}

\begin{lemma}\label{lem:AP}
  Let $M$ be aperiodic and $x,y \in M$. Then $xy = 1$ if and only
  if $x = 1$ and $y=1$.
\end{lemma}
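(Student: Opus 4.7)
The backward direction is immediate, so the content is in the forward implication. The plan is to exploit aperiodicity applied to the element $x$ itself, turning the one-sided relation $xy = 1$ into an equation that forces $y$ (and then $x$) to be the identity.

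First I would record the following consequence of $xy = 1$: for every $n \geq 0$ one has $x^n y^n = 1$. This is a routine induction, peeling off $xy$ from the middle: $x^{n+1} y^{n+1} = x^n (xy) y^n = x^n y^n$. Next, apply aperiodicity to $x$ to pick some $n \in \N$ with $x^n = x^{n+1}$. Multiplying this equality on the right by $y^{n+1}$ gives
\[
x^n y^{n+1} \;=\; x^{n+1} y^{n+1} \;=\; 1.
\]
On the other hand, $x^n y^{n+1} = (x^n y^n)\, y = 1 \cdot y = y$, whence $y = 1$. Substituting back into the hypothesis, $x = x \cdot 1 = xy = 1$.

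I do not expect a real obstacle here; the only small choice is deciding which of $x$ or $y$ to feed into the aperiodicity axiom. Picking $x$ works cleanly because the identity $x^n y^n = 1$ lets the ``extra'' $y$ on the right be isolated after the substitution $x^{n+1} = x^n$. The proof uses nothing beyond the definition of aperiodicity and the associative law, so it should be written out in just a few lines.
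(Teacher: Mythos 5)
Your proof is correct and is essentially the paper's argument: both establish $x^n y^n = 1$ and then use $x^n = x^{n+1}$ to peel off one factor, the only difference being that the paper isolates $x$ directly via $1 = x^{n+1}y^n = x\cdot(x^ny^n) = x$ while you isolate $y$ first.
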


\begin{proof}
  If $xy = 1$, then $1 = xy = x^n y^n = x^{n+1} y^n = x \cdot 1 = x$.
\end{proof}

A monoid $M$ \emph{recognizes} a language $L \subseteq
A^*$ if there exists a homomorphism $\varphi : A^* \to M$ with
$\varphi^{-1}\big(\varphi(L)\big) = L$. A consequence of Kleene's
Theorem is that a language is regular if and only if it is
recognizable by a finite monoid, see \eg~\cite{pin86}.  The class of
\emph{aperiodic languages} $\AP(A^*)$ contains all languages $L
\subseteq A^*$ which are recognized by some finite aperiodic monoid.

The \emph{syntactic congruence $\equiv_L$} of a language $L \subseteq
A^*$ is defined as follows. For $u,v \in A^*$ we set $u \equiv_L v$
if for all $p,q \in A^*$ we have\,
  $puq \in L \iff pvq \in L$.
The \emph{syntactic monoid} $\Synt(L)$ of a language $L \subseteq A^*$ is the quotient
$A^* / \equiv_L$ consisting of the equivalence classes modulo $\equiv_L$. The \emph{syntactic homomorphism} $\mu_L
: A^* \to \Synt(L)$ with $\mu_L(u) = \set{v}{u \equiv_L v}$ satisfies
$\mu_L^{-1}\big(\mu_L(L)\big) = L$. In particular, $\Synt(L)$ recognizes $L$ and
it is the unique minimal monoid with this property, see \eg~\cite{pin86}.

Let $M$ be a
monoid and $c \in M$. We introduce a new multiplication $\circ$ on $cM
\cap Mc$. For $xc,cy \in cM \cap Mc$ we let
\begin{equation*}
  xc \circ cy = xcy.
\end{equation*}
This operation is well-defined since $x'c = xc $ and $cy' = cy$
implies $x' c y' = xc y' = x cy$. For $cx, cy \in Mc$ we have $cx
\circ cy = cxy \in Mc$. Thus, $\circ$ is associative and $c$ is the
neutral element of the monoid $M_c = (cM \cap Mc, {\circ}, c)$.  Moreover,
$M'= \set{x\in M}{cx \in Mc}$ is a submonoid of $M$ such that $M' \to
cM \cap Mc$ with $x \mapsto cx$ becomes a homomorphism. It is surjective and
hence, $M_c$ is a divisor of $(M, \cdot, 1)$ called the \emph{local divisor of $M$ at $c$}. Note that if $c^2 = c$, then $M_c$ is just the local monoid $(cMc,\cdot,c)$ at the idempotent~$c$.

\begin{lemma}\label{lem:loc}
  If $M$ is a finite aperiodic monoid and $1 \neq c \in M$, then $M_c$
  is aperiodic and $\abs{M_c} < \abs{M}$.
\end{lemma}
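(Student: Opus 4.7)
My plan is to split the statement into its two assertions and handle each separately, using the definition of the local divisor operation and the two earlier lemmas (especially \reflem{lem:AP}).

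For aperiodicity of $M_c$, I would pick an arbitrary element $m \in M_c = cM \cap Mc$ and exploit the fact that we can represent it as $m = cy = xc$ for some $x,y \in M$. The key identity I would aim to establish by induction on $n$ is
\begin{equation*}
  m^{\circ n} = c y^n,
\end{equation*}
using $xc \circ cy = xcy$. The base case is trivial; for the step, I write $cy^n = x_n c$ (possible since $cy^n \in M_c \subseteq Mc$) and compute $cy^n \circ cy = x_n c y = cy^{n+1}$. Once this identity is in hand, aperiodicity of $M$ gives some $n$ with $y^n = y^{n+1}$, so $m^{\circ n} = cy^n = cy^{n+1} = m^{\circ(n+1)}$, and $M_c$ is aperiodic.

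For the strict size inequality $\abs{M_c} < \abs{M}$, since $M_c \subseteq M$ as a set, it suffices to produce an element of $M$ not in $M_c$. The natural candidate is $1$. If $1 \in cM$, then $1 = cx$ for some $x$, and \reflem{lem:AP} (applied to the aperiodic monoid $M$) forces $c = 1$, contradicting the hypothesis $c \neq 1$. Hence $1 \notin cM$, so $1 \notin cM \cap Mc = M_c$, giving $\abs{M_c} \le \abs{M} - 1 < \abs{M}$.

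I do not expect any real obstacle here: the only mildly delicate point is keeping track of the two representations $m = xc$ and $m = cy$ in the inductive computation of $m^{\circ n}$, but this is mechanical once the formula $m^{\circ n} = cy^n$ is guessed. The size bound is essentially immediate from \reflem{lem:AP}.
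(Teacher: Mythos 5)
Your proposal is correct and follows essentially the same route as the paper: powers in $M_c$ of an element $cy$ coincide with $cy^n$ computed in $M$ (the paper gets this directly from the identity $cx \circ cy = cxy$ noted in the preliminaries, whereas you re-derive it by induction), and the strict size bound comes from $1 \notin cM$ via \reflem{lem:AP}. No gaps.
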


\begin{proof}
  If $x^n = x^{n+1}$ in $M$ for $cx \in Mc$, then $(cx)^n = cx^n =
  cx^{n+1} = (cx)^{n+1}$ where the first and the last power is in $M_c$. This shows that $M_c$ is aperiodic.
  By \reflem{lem:AP} we have $1 \not\in cM$ and thus $1 \in M \setminus M_c$.
\end{proof}

\section{Sch{\"u}tzenberger's Theorem on star-free languages}

The following proposition establishes the more difficult inclusion of Sch\"utzenberger's result $\SF(A^*) = \AP(A^*)$. Its proof relies on local divisors.

\begin{proposition}\label{prp:ap:sf}
  Let $\varphi : A^* \to M$ be a homomorphism to a finite aperiodic monoid $M$. Then for all $p \in M$ we have $\varphi^{-1}(p) \in \SF(A^*)$.
\end{proposition}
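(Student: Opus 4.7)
The plan is to induct on $(|M|, |A|)$ in lexicographic order, reducing $|M|$ strictly in the main case via the local divisor of \reflem{lem:loc}. The cases $|M| = 1$ and $|A| = 0$ are immediate. For $p = 1$, iterating \reflem{lem:AP} on the letter factorisation forces any $w \in \varphi^{-1}(1)$ to use only letters of $B := \{a \in A : \varphi(a) = 1\}$; hence $\varphi^{-1}(1) = B^* = A^* \setminus \bigcup_{a \in A \setminus B} A^* a A^* \in \SF(A^*)$.

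For $p \neq 1$, I would pick $c \in A$ with $\gamma := \varphi(c) \neq 1$ (otherwise $\varphi^{-1}(p) \subseteq \varphi^{-1}(1)$ is empty). By \reflem{lem:loc}, $M_\gamma$ is aperiodic with $|M_\gamma| < |M|$. Setting $B := A \setminus \{c\}$ and $\psi := \varphi|_{B^*}$, I would partition $\varphi^{-1}(p)$ by the number of $c$'s in the word. The no-$c$ piece is $\psi^{-1}(p) \subseteq B^*$, star-free in $B^*$ by the secondary induction on $|A|$ and hence in $A^*$ by \reflem{lem:subsetSF}. The exactly-one-$c$ piece decomposes uniquely as $u_0 c u_1$ with $u_i \in B^*$, yielding the star-free union $\bigcup_{q_0 \gamma q_1 = p} \psi^{-1}(q_0) \cdot \{c\} \cdot \psi^{-1}(q_1)$.

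The substantial piece is words with at least two $c$'s. Such a $w$ decomposes uniquely as $w = u_0 \cdot m \cdot u_k$ with $u_0, u_k \in B^*$ and $m \in cA^*c$; crucially, $\varphi(m) \in M_\gamma$. So this piece is a finite union, over $(q_0, r, q_k) \in M \times M_\gamma \times M$ with $q_0 r q_k = p$, of $\psi^{-1}(q_0) \cdot L_r \cdot \psi^{-1}(q_k)$ where $L_r := cA^*c \cap \varphi^{-1}(r)$. It suffices to show $L_r \in \SF(A^*)$. Writing $m = c b_1 c b_2 \cdots c b_{k-1} c$ with $b_i \in B^*$ and $k \geq 2$, one has $\varphi(m) = (\gamma\psi(b_1)\gamma) \circ (\gamma\psi(b_2)\gamma) \circ \cdots \circ (\gamma\psi(b_{k-1})\gamma)$ in $M_\gamma$. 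Introducing the finite alphabet $E := \{\gamma\psi(b)\gamma : b \in B^*\} \subseteq M_\gamma$ and the canonical homomorphism $\tilde\varphi : E^* \to M_\gamma$, the primary induction hypothesis (since $|M_\gamma| < |M|$) gives $\tilde\varphi^{-1}(r) \in \SF(E^*)$. With the substitution $\sigma(e) := \{b \in B^* : \gamma\psi(b)\gamma = e\} \cdot \{c\}$, one obtains $L_r = \{c\} \cdot \sigma\bigl(\tilde\varphi^{-1}(r) \setminus \{\varepsilon\}\bigr)$.

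The hard part will be transferring star-freeness from $E^*$ to $A^*$ through $\sigma$. Each $\sigma(e)$ is star-free in $A^*$ by the secondary induction on $|A|$ (it is a finite union of $\psi^{-1}(q)$'s concatenated with $\{c\}$), and $\sigma(E^*) = (B^*c)^* = \{\varepsilon\} \cup A^*c$ is star-free. The key property is that $\sigma$ is \emph{unambiguous}: each $\sigma(e)$ ends with $c$ and $B^*$ contains no $c$, so every word in $\sigma(E^*)$ admits a unique parsing into $\sigma$-blocks. Unique parsing makes $\sigma$ commute not only with union and concatenation but also with set difference, so a structural induction on a star-free expression for $\tilde\varphi^{-1}(r)$ transfers star-freeness block-by-block to $A^*$. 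Combining the three pieces yields $\varphi^{-1}(p) \in \SF(A^*)$.
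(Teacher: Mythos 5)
Your proof is correct and follows essentially the same route as the paper: lexicographic induction on $(\abs{M},\abs{A})$, factorisation of words at the occurrences of a letter $c$ with $\varphi(c)\neq 1$, passage to the local divisor $M_{\varphi(c)}$ via \reflem{lem:loc}, and transfer of star-freeness back to $A^*$ through the unambiguous block substitution (your forward substitution $\sigma$ is exactly the inverse of the paper's relabelling homomorphism onto $T^*$). The only differences are cosmetic: the separate treatment of $p=1$ and of words with exactly one $c$, both of which the paper absorbs into the general case.
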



\begin{proof}
  We proceed by induction on $(\abs{M},\abs{A})$ with lexicographic order. If $\varphi(A^*) = \smallset{1}$, then depending on $p$ we either have $\varphi^{-1}(p) = \emptyset$ or $\varphi^{-1}(p) = A^*$. In any case, $\varphi^{-1}$ is in $\SF(A^*)$. Note that is includes both bases cases $M = \smallset{1}$ and $A = \emptyset$.
  Let now $\varphi(A^*) \neq \smallset{1}$. Then there exists $c \in A$ with $\varphi(c) \neq 1$. We set $B = A \setminus \smallset{c}$ and we let $\varphi_c : B^* \to M$ be the restriction of $\varphi$ to $B^*$. We have
  \begin{equation}\label{eq:1}
    \varphi^{-1}(p) \;=\;
      \varphi_c^{-1}(p) \,\cup \!\!\!\!
      \bigcup_{\scriptsize\begin{array}{c}
          p = p_1 p_2 p_3
        \end{array}} \!\!\!\!
      \varphi^{-1}_{c}(p_1) \cdot
      \big(\varphi^{-1}(p_2) \cap
      c \/ A^* \cap A^* \hspace*{-0.5pt} c \big) \cdot
      \varphi^{-1}_{c}(p_3).
  \end{equation}
  The inclusion from right to left is trivial. The other inclusion can
  be seen as follows: Every word $w$ with $\varphi(w) = p$ either does not contain the letter $c$ or we
  can factorize $w = w_1 w_2 w_3$ with $c \not\in \alp(w_1 w_3)$ and
  $w_2 \in cA^* \cap A^*c$, \ie, we factorize $w$ at the first and
  the last occurrence of $c$. Equation~\eqref{eq:1} is established by
  setting $p_i = \varphi(w_i)$. By induction on the size of the
  alphabet, we have $\varphi_c^{-1}(p_i) \in \SF(B^*)$, and thus $\varphi_c^{-1}(p_i) \in \SF(A^*)$
  by \reflem{lem:subsetSF}.

  Since $\SF(A^*)$ is closed under union and concatenation, it remains
  to show $\varphi^{-1}(p) \cap c\/A^* \cap A^* c \in \SF(A^*)$ for
  $p \in \varphi(c) M \cap M \varphi(c)$. Let
  \begin{equation*}
    T = \varphi_c(B^*).
  \end{equation*}
  The set $T$ is a submonoid of $M$. In the remainder of this
  proof, we will use~$T$ as a finite alphabet. We define a substitution
  \begin{alignat*}{2}
    \sigma : \ \,&&( B^*\hspace*{1pt} c )^* \ &\to \ T^* \\
    && v_1 c \cdots v_k c \ &\mapsto \ \varphi_c(v_1) \cdots \varphi_c(v_k)
  \end{alignat*}
  for $v_i \in B^*$. In addition, we define a homomorphism $\psi : T^* \to M_c$ with
  $M_c = (\varphi(c) M \cap M \varphi(c), \circ, \varphi(c))$ by
  \begin{alignat*}{2}
    \psi : \ && T^* &\to M_c \\
    && \varphi_c(v) &\mapsto \varphi(cvc)
  \end{alignat*}
  for $\varphi_c(v) \in T$. Consider a word $w = v_1 c \cdots v_k c$ with $k \geq 0$ and $v_i \in
  B^*$. Then
  \begin{equation}
  \begin{aligned}[b]
    \psi \bigl(\sigma(w) \bigr)
    &= \psi\bigl( \varphi_c(v_1) \varphi_c(v_2) \cdots \varphi_c(v_k) \bigr) \\
    &= \varphi(c v_1 c) \circ \varphi(c v_2 c) \circ 
    \cdots \circ \varphi(c v_k c) \\
    &= \varphi(c v_1 c v_2 \cdots c v_k c) 
    = \varphi(cw).
    \label{eq:important}
  \end{aligned}
  \end{equation}
  Thus, we have $cw \in \varphi^{-1}(p)$ if and
  only if $w \in \sigma^{-1} \bigl( \psi^{-1}(p) \bigr)$.  This shows $\varphi^{-1}(p)
  \cap c\/A^* \cap A^* c = c \cdot \sigma^{-1} \bigl( \psi^{-1}(p) \bigr)$ for every
  $p \in \varphi(c) M \cap M \varphi(c)$. In particular, it remains to
  show $\sigma^{-1} \bigl( \psi^{-1}(p) \bigr) \in \SF(A^*)$. By \reflem{lem:loc},
  the monoid~$M_c$ is aperiodic and $\abs{M_c} < \abs{M}$. Thus, by
  induction on the size of the monoid we have $\psi^{-1}(p) \in
  \SF(T^*)$, and by induction on the size of the alphabet we have
  $\varphi_c^{-1}(t) \in \SF(B^*)
  \subseteq \SF(A^*)$ for every $t \in T$. For $t \in T$ and $K,K' \in
  \SF(T^*)$ we have
  \begin{align*}
    \sigma^{-1}(T^*) &= A^* c \cup \smallset{1} \\
    \sigma^{-1}(t) &= \varphi_c^{-1}(t) \cdot c \\
    \sigma^{-1}(K \cup K')
    &= \sigma^{-1}(K) \cup \sigma^{-1}(K') \\
    \sigma^{-1}(K \setminus K') 
    &= \sigma^{-1}(K) \setminus \sigma^{-1}(K') \\
    \sigma^{-1}(K \cdot K') &= \sigma^{-1}(K) \cdot
    \sigma^{-1}(K').
  \end{align*}
  Only the last equality requires justification. The inclusion from
  right to left is trivial. For the other inclusion, suppose $w = v_1
  c \cdots v_k c \in \sigma^{-1} (K \cdot K')$ for $k \geq 0$ and $v_i
  \in B^*$. Then $\varphi_c(v_1) \cdots
  \varphi_c(v_k) \in K \cdot K'$, and thus $\varphi_c(v_1) \cdots
  \varphi_c(v_i) \in K$ and $\varphi_c(v_{i+1}) \cdots \varphi_c(v_k)
  \in K'$ for some $i \geq 0$. It follows $v_1 c \cdots v_i c \in
  \sigma^{-1}(K)$ and $v_{i+1} c \cdots v_k c \in K'$. This shows $w
  \in \sigma^{-1}(K) \cdot \sigma^{-1}(K')$.

  We conclude that $\sigma^{-1}(K) \in \SF(A^*)$ for every $K \in
  \SF(T^*)$.  In particular, we have $\sigma^{-1} \big(\psi^{-1}(p)\big) \in \SF(A^*)$.
\end{proof}

\begin{remark}
  A more algebraic viewpoint of the proof of \refprp{prp:ap:sf} is the following.
The mapping $\sigma$ can
  be seen as a length-preserving homomorphism from a submonoid of
  $A^*$\hspace*{1pt}---\hspace*{1pt}freely generated by the infinite set $B^* \hspace*{1pt}c$\hspace*{1pt}---\hspace*{1pt}onto $T^*$; and this homomorphism is defined by
  $\sigma(vc) = \varphi_c(v)$ for $vc \in B^* \hspace*{0.5pt}
  c$. The mapping $\tau : M\varphi(c) \cup \smallset{1}
  \to M_c$ with $\tau(x) = \varphi(c) \cdot x$ defines a homomorphism. Now, by Equation~\eqref{eq:important} the following diagram commutes:
  \begin{center}
    \begin{tikzpicture}[scale=0.75]
      \draw (0,0) node (Pc) {$(B^* \hspace*{0.5pt} c)^*$};
      \draw (3,0) node (Q) {$T^*$};
      \draw (0,-2) node (M) {$M\varphi(c) \cup \smallset{1}$};
      \draw (3,-2) node (Mc) {$M_c$};
      \draw[->] (Pc) -- node[above] {$\sigma$} (Q);
      \draw[->] (Pc) -- node[left] {$\varphi$} (M);
      \draw[->] (Q) -- node[right] {$\psi$} (Mc);
      \draw[->] (M) -- node[above] {$\tau$} (Mc);
    \end{tikzpicture}
  \end{center}
  
  \vspace*{-2\baselineskip}
  
  \ERM
\end{remark}

The following lemma gives the remaining inclusion of $\SF(A^*) = \AP(A^*)$. Its proof is standard; it is presented here
only to keep this paper self-contained.

\begin{lemma}\label{lem:sf:ap}
  For every language $L \in \SF(A^*)$ there exists an integer $n(L)
  \in \N$ such that for all words $p,q,u,v \in A^*$ we have
  \begin{equation*}
    p\, u^{n(L)}q \in L \ \Leftrightarrow \ p\, u^{n(L)+1}q \in L.
  \end{equation*}
\end{lemma}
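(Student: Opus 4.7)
The plan is to proceed by structural induction on $L \in \SF(A^*)$, inductively defining $n(L)$ and checking the equivalence at each construction. The base cases $L = A^*$, $L = \emptyset$, and $L = \smallset{a}$ are settled by taking $n(L) = 0$, $n(L) = 0$, and $n(L) = 2$ respectively; for $L = \smallset{a}$ the point is that either $u = \ew$ (in which case $p u^{2} q$ and $p u^{3} q$ both equal $pq$) or $u \neq \ew$ (in which case both words have length at least two, so neither can equal the single letter $a$).

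A useful observation I will use throughout: once $p u^{n} q \in L \iff p u^{n+1} q \in L$ holds for all $p, q, u$, substituting $p \mapsto p u$ upgrades it to $p u^{m} q \in L \iff p u^{m+1} q \in L$ for every $m \geq n$. In particular, membership of $p u^m q$ in $L$ is independent of $m$ for $m \geq n(L)$. This makes the Boolean cases immediate: setting $n(K \cup K') = n(K \setminus K') = \max\parenth{n(K), n(K')}$, any Boolean combination of $K$ and $K'$ inherits the required stability from its arguments.

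The concatenation case $L = K \cdot K'$ is the real work. Set $n(L) = n(K) + n(K') + 1$. For the forward direction, given $p u^{n(L)} q \in L$, fix a factorization $p u^{n(L)} q = xy$ with $x \in K$, $y \in K'$, and case on where the cut lies. If it falls inside $p$ or inside $q$, the whole block $u^{n(L)}$ sits inside $y$ or inside $x$ respectively, and I pump one more copy of $u$ inside that factor using the iterated induction hypothesis. If the cut falls inside $u^{n(L)}$, write $x = p u^{k} u_1$ and $y = u_2 u^{l} q$ with $u = u_1 u_2$ and $k + l \geq n(L) - 1 = n(K) + n(K')$; pigeonhole yields $k \geq n(K)$ or $l \geq n(K')$, and I pump the corresponding side to produce a factorization of $p u^{n(L)+1} q$. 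The reverse direction is symmetric: the analogous case analysis on $p u^{n(L)+1} q = xy$ gives $k + l \geq n(L) = n(K) + n(K') + 1$, so at least one side has strictly more than the threshold number of copies of $u$ and can be depumped by the iterated induction hypothesis.

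The main obstacle is managing the concatenation case cleanly: keeping track of where the cut splits $u$ via the decomposition $u = u_1 u_2$ (with $u_1 = \ew$ or $u_2 = \ew$ allowed, covering cuts on copy boundaries), verifying that the pigeonhole bound $n(K) + n(K') + 1$ is sharp enough in both directions of the equivalence, and treating the degenerate case $u = \ew$ (trivial, since then $p u^m q = pq$ for every $m$). Everything else is bookkeeping.
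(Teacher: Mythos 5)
Your proposal is correct and follows essentially the same route as the paper: structural induction with $n(A^*)=0$, $n(\smallset{a})=2$, $\max$ for the Boolean operations, and $n(K)+n(K')+1$ for concatenation, where the cut-position/pigeonhole analysis you spell out is exactly the argument the paper compresses into its ``prefix $q'$ / suffix $p'$'' case split. No gaps; your explicit handling of the $u=\ew$ case and of cuts landing inside $p$ or $q$ is just a more detailed write-up of the same proof.
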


\begin{proof}
  For the languages $A^*$ and $\smallset{a}$ with $a \in A$ we define $n(A^*) = 0$ and $n(\smallset{a}) = 2$.
  Let now $K,K' \in \SF(A^*)$ such that $n(K)$ and $n(K')$ exist.
  We set
  \begin{gather*}
    n(K \cup K') =  n(K \setminus K') = \max \bigl( n(K), n(K') \bigr), \\
    n(K \cdot K') = n(K) + n(K') + 1.
  \end{gather*}
  The correctness of the first two choices is straightforward. For the last equation,
  suppose $p\, u^{n(K)+n(K')+2}q \in K \cdot K'$. Then either $p\,
  u^{n(K)+1} q' \in K$ for some prefix $q'$ of $u^{n(K')+1} q$ or
  $p'\, u^{n(K')+1} q \in K'$ for some suffix $p'$ of $p u^{n(K) +
    1}$. By definition of $n(K)$ and $n(K')$ we have $p\, u^{n(K)} q' \in
  K$ or $p'\, u^{n(K')} q \in K'$, respectively. Thus $p\,
  u^{n(K)+n(K')+1}q \in K \cdot K'$. The other direction is similar:
  If $p\, u^{n(K)+n(K')+1}q \in K \cdot K'$, then $p\,
  u^{n(K)+n(K')+2}q \in K \cdot K'$. This completes the proof.
\end{proof}

\begin{theorem}[Sch{\"u}tzenberger]\label{thm:schutz}
  Let $A$ be a finite alphabet and let $L \subseteq A^*$. The following conditions are equivalent:
  \begin{enumerate}
  \item\label{aaa:schutz} $L$ is star-free.
  \item\label{bbb:schutz} The syntactic monoid of $L$ is finite and aperiodic.
  \item\label{ccc:schutz} $L$ is recognized by a finite aperiodic monoid.
  \end{enumerate}
\end{theorem}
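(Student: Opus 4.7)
The plan is to close the cycle (\ref{aaa:schutz}) $\Rightarrow$ (\ref{bbb:schutz}) $\Rightarrow$ (\ref{ccc:schutz}) $\Rightarrow$ (\ref{aaa:schutz}) by combining \refprp{prp:ap:sf} and \reflem{lem:sf:ap} with the standard observation that every regular language has a finite syntactic monoid. The technical substance already sits in these two auxiliary results, so the theorem itself amounts to bookkeeping.

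For (\ref{ccc:schutz}) $\Rightarrow$ (\ref{aaa:schutz}), I would let $\varphi : A^* \to M$ be a homomorphism recognizing $L$ with $M$ finite and aperiodic, and write
\[
  L \;=\; \bigcup_{p \in \varphi(L)} \varphi^{-1}(p),
\]
a finite union because $M$ is finite. By \refprp{prp:ap:sf} each set $\varphi^{-1}(p)$ belongs to $\SF(A^*)$, and since $\SF(A^*)$ is closed under (finite) union, $L \in \SF(A^*)$ follows. The step (\ref{bbb:schutz}) $\Rightarrow$ (\ref{ccc:schutz}) is immediate: the syntactic homomorphism $\mu_L : A^* \to \Synt(L)$ already recognizes $L$, so one may simply take $M = \Synt(L)$.

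For (\ref{aaa:schutz}) $\Rightarrow$ (\ref{bbb:schutz}) I would first invoke the observation from \refsec{sec:prem} that every star-free language is regular, which gives finiteness of $\Synt(L)$. For aperiodicity I would apply \reflem{lem:sf:ap} to obtain $n = n(L) \in \N$ such that $p u^n q \in L \iff p u^{n+1} q \in L$ for all $p,q,u \in A^*$. This is precisely the statement $u^n \equiv_L u^{n+1}$, so $\mu_L(u)^n = \mu_L(u)^{n+1}$ in $\Synt(L)$ for every $u \in A^*$; since every element of the syntactic monoid has the form $\mu_L(u)$, the monoid is aperiodic.

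There is no genuine obstacle. The one point worth highlighting is that \reflem{lem:sf:ap} on its own supplies only the aperiodicity identity but not finiteness of $\Synt(L)$, so one must separately appeal to the fact that $\SF(A^*) \subseteq \mathrm{Reg}(A^*)$ to close the argument.
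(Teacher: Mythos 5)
Your proposal is correct and follows essentially the same route as the paper: the implication (\ref{ccc:schutz})\,$\Rightarrow$\,(\ref{aaa:schutz}) via the finite union $L = \bigcup_{p \in \varphi(L)} \varphi^{-1}(p)$ and \refprp{prp:ap:sf}, the trivial step (\ref{bbb:schutz})\,$\Rightarrow$\,(\ref{ccc:schutz}), and (\ref{aaa:schutz})\,$\Rightarrow$\,(\ref{bbb:schutz}) from regularity of star-free languages plus \reflem{lem:sf:ap}. Your explicit remark that finiteness of $\Synt(L)$ must come from regularity rather than from \reflem{lem:sf:ap} matches the paper's use of the same two facts.
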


\begin{proof}
  ``\ref{aaa:schutz}$\;\Rightarrow\;$\ref{bbb:schutz}'': Every
  language $L \in \SF(A^*)$ is regular. Thus
  $\Synt(L)$ is finite, \cf~\cite{pin86}. By \reflem{lem:sf:ap}, we see that
  $\Synt(L)$ is aperiodic. The implication ``\ref{bbb:schutz}$\;\Rightarrow\;$\ref{ccc:schutz}'' is trivial.
  If $\varphi^{-1}\big(\varphi(L)\big) = L$, then we can write $L = \bigcup_{p \in \varphi(L)} \varphi^{-1}(p)$. Therefore,
  ``\ref{ccc:schutz}$\;\Rightarrow\;$\ref{aaa:schutz}'' follows by \refprp{prp:ap:sf}.
\end{proof}

The syntactic monoid of a regular language (for instance, given by a non\-deterministic automaton) is effectively computable. 
Hence, 
from the equivalence of conditions ``\ref{aaa:schutz}'' and ``\ref{bbb:schutz}'' in \refthm{thm:schutz} it follows that star-freeness is a decidable property of regular languages. The equivalence of 
``\ref{aaa:schutz}'' and ``\ref{ccc:schutz}'' can be written as
\begin{equation*}
  \SF(A^*) = \AP(A^*).
\end{equation*}
The equivalence of ``\ref{bbb:schutz}'' and ``\ref{ccc:schutz}'' is rather trivial: The class of finite aperiodic monoids is closed under division, and the syntactic monoid of $L$ divides any monoid that recognizes $L$, see \eg~\cite{pin86}.

\subsection*{Acknowlegdements}

The author would like to thank Volker Diekert and Benjamin Steinberg for  many interesting discussions on the proof method used in~\refprp{prp:ap:sf}.


\newcommand{\Ju}{Ju}\newcommand{\Ph}{Ph}\newcommand{\Th}{Th}\newcommand{\Ch}{Ch}\newcommand{\Yu}{Yu}\newcommand{\Zh}{Zh}\newcommand{\St}{St}\newcommand{\curlybraces}[1]{\{#1\}}

\end{document}